\begin{document}

\title{A Bayesian view of Single-Qubit Clocks, and an Energy versus Accuracy tradeoff}
\author{Manoj Gopalkrishnan}
\email{manoj.gopalkrishnan@gmail.com}
\affiliation{Department of Electrical Engineering, Indian Institute of Technology Bombay, Powai, Mumbai-400076, India}

 \author{Varshith Kandula}
 \affiliation{Department of Electrical Engineering, Indian Institute of Technology Bombay, Powai, Mumbai-400076, India
}

\author{Praveen Sriram}
 \affiliation{Department of Electrical Engineering, Indian Institute of Technology Bombay, Powai, Mumbai-400076, India
}

\author{Abhishek Deshpande}
\affiliation{School of Technology and Computer Science,
Tata Institute of Fundamental Research, Mumbai-400005, India\\
Department of Mathematics,
Imperial College London, London SW7 2AZ , UK}

\author{Bhaskaran Muralidharan}
\email{bm@ee.iitb.ac.in}
 \affiliation{Department of Electrical Engineering,\\
Indian Institute of Technology Bombay, Powai, Mumbai-400076, India
}

\date{\today}

\begin{abstract}
We bring a Bayesian approach to the analysis of clocks. Using exponential distributions as priors for clocks, we analyze how well one can keep time with a single qubit freely precessing under a magnetic field. We find that, at least with a single qubit, quantum mechanics does not allow exact timekeeping, in contrast to classical mechanics which does. We find the design of the single-qubit clock that leads to maximum accuracy. Further, we find an energy versus accuracy tradeoff --- the energy cost is at least $k_BT$ times the improvement in accuracy as measured by the entropy reduction in going from the prior distribution to the posterior distribution. We propose a physical realization of the single qubit clock using charge transport across a capacitively-coupled quantum dot.

\end{abstract}
\maketitle

\renewcommand\comment[1]{{\color{magenta} $\star$#1$\star$}}
\newcommand\e{\operatorname{e}}
\theoremstyle{plain}
\newtheorem{theorem}{Theorem}[section]
\newtheorem{lemma}{Lemma}[section]

\newenvironment{bullets}%
{\begin{list}
		{\noindent\makebox[0mm][r]{$\bullet$}}
		{\leftmargin=5.5ex \usecounter{enumi}
			 		
			\topsep=1.5mm \itemsep=-.75ex}
	}
	{\end{list}}

\section{\label{sec:intro}INTRODUCTION}
A clock is a device that couples a periodic or approximately periodic motion to a counter that increments upon ``ticks'' of the periodic motion. Classical mechanics allows harnessing periodic motion from a simple harmonic oscillator \cite{Allan} to build perfectly accurate clocks, at least in principle and in the absence of noise. Do the laws of quantum mechanics allow clocks with perfect inter-tick durations? One difficulty manifests immediately. Though a quantum system may display periodic motion, quantum measurement only provides partial information about the full quantum state. The first question we address here is: what are the limits to accuracy of inter-tick durations for resource-limited quantum systems?

In classical mechanics, in the absence of noise, clocks need not dissipate any energy. The rotation of the earth may be set forward as an example that comes very close to this ideal. In practice, man-made clocks require energy: wall clocks run on batteries, mechanical pendulum clocks and watches run down and need to be wound up. Do the laws of quantum mechanics require clocks to be dissipating? This is the second question we address.

We make a step towards addressing these questions by describing clocks as information processing devices that employ Bayesian inference, and use this framework to analyze the case of a clock constructed from a single qubit.

\textbf{Contributions:}
\begin{itemize}

\item Our approach in Section~\ref{sec:clocks} brings to the fore the role of information processing in the keeping of time. In Subection~\ref{subsec:bayes}, we connect the problem of timekeeping to Bayesian inference. In Subection~\ref{subsec:rv}, we describe the time between ticks in the language of random variables, and argue for treating exponential random variables as free resources, and hence as reasonable Bayesian priors.

\item The minimal example of periodic motion in quantum mechanics is a precessing spin modeled by a single qubit. We show in Section~\ref{sec:qubitclock} how to construct the most accurate clock possible given the resource constraint of a single precessing spin and a process that generates events with exponential inter-arrival times. Our results show that within these resource constraints quantum mechanics does not allow perfectly accurate timekeeping.

\item We show in Section~\ref{sec:tradeoff} that there is an energy versus accuracy tradeoff for keeping time with a single qubit. The smaller the desired spread of uncertainty around the time of a tick, the greater the amount of energy required. Specifically we prove in Theorem~\ref{thm:energyaccuracy} that the amount of energy required is at least $k_B T$ times the accuracy gain as measured by reduction in entropy of the inter-tick distribution. 

\item Our results encourage us to speculate on two new principles for quantum timekeeping. First, our results of Section~\ref{sec:qubitclock} lead us to speculate that resource-constrained quantum systems may not allow perfect timekeeping. Second, Theorem~\ref{thm:energyaccuracy} leads us to speculate that there may be an energy versus accuracy tradeoff for timekeeping which manifests in a form reminiscent of the Szilard-Landauer principle (Section~\ref{sec:tradeoff}), except that the relevant entropy is defined on the time variable.

\item In Section~\ref{sec:setup}, we suggest a physical implementation of our proposal via a charge transport set up involving two capacitively coupled quantum dots in an attempt to outline a scheme for estimating tunneling times.
\end{itemize}

\section{An Information Processing View of Clocks}\label{sec:clocks}

\subsection{Bayesian Inference}\label{subsec:bayes}
There is an apparent paradox at the heart of timekeeping. Two readings of a clock face inform us of the duration of time only upto a periodic factor, yet we are never confused about the actual time elapsed. Consider this thought experiment. We make two observations on a typical wall clock with markings from $1$ to $12$ that is assumed to be functioning correctly. The first observation reports the hour hand on $7$ and the minute hand on $0$, and the second observation (one hour later) reports the hour hand on $8$ and the minute hand on $0$. According to this clock, going strictly by the observation, the time elapsed equals $12 n + 1$ hours where $n$ is a natural number. The natural number $n$ has to be determined using means external to the clock. In practice, we are not often confused about the value of $n$, and can confidently assert that $1$ hour has elapsed between the two observations. Why is this so? Where did we get the side information that allows us to confidently assert that $n=0$? 

The answer is that we have an ``a priori'' sense of the passage of time, which comes from observing various events or ``ticks'' that are constantly happening around us. Observation of the clock face allows us to refine this prior and infer how much time has elapsed. In this view of timekeeping, clocks refine our ``a priori'' notion of time. A clock is then fully described by specifying the prior notion of time, as well as a new observation that allows us to update this prior notion of time, and a counter to accumulate successive estimates of time elapsed. We thus sidestep the question ``what is a clock,'' and focus on the question of improving a given clock with the help of side information.

The Bayesian approach to modelling uncertainty is to introduce probability distributions. We will describe our prior notion of time by means of a random variable $T$ taking values in the positive reals, for example representing the time that elapses between consecutive observations of a clock. Just before we make a new observation, we are uncertain about exactly what the time is, with the uncertainty described by the spread of the distribution of $T$. For example, in our thought experiment, $T$ represents our uncertainty about the time at the second instant when we decided to look at the clock, just before we noted the hour hand on $8$ and the minute hand on $0$. We don't know that exactly one hour has elapsed, but it is likely that most of the probability is concentrated around the one hour mark, and very little probability is around $12n+1$ hours for larger values of $n$. This is the side information we are using to decide that $n=0$ with high probability.

The physical experiment that we perform to refine our notion of time --- for example, reading the face of the clock --- gives us a finite number of outcomes. Let $S$ be a random variable taking values in a finite set. Suppose we get to observe $S$, and find that the event $S = s$ is true. We have obtained some information about the random variable $T$ from the correlation between the random variables $T$ and $S$. The random variable $T_s := T|(S=s)$ is obtained from $T$ by conditioning on this information. For an interval $I\subseteq\mathbb{R}_{\geq 0}$, the posterior probability ${\Pr[T_s\in I] = \Pr[T\in I\mid S=s]}$ of $T_s$ is computed by Bayes' law:
\begin{equation}\label{eq:bayes}
	\Pr[T\in I\mid S=s] = \frac{\Pr[S=s \mid T\in I] \,\Pr[T\in I]}{\Pr[S=s]}
\end{equation}
\subsection{Clocks as Random Variables}\label{subsec:rv}

Consider a random variable $T$ that takes values in $\mathbb{R}_{\geq 0}$ and has expected value $E[T] = 1/\lambda$. The best such random variable for accuracy of timekeeping is a \textbf{delta distribution} $\delta_{1/\lambda}$, because this corresponds to complete certainty. The worst such random variable for accuracy of timekeeping is one whose distribution is as spread out as possible. If use differential entropy to measure the amount that the probability density $f(t)=\operatorname{Pr}[T\in(t,t+dt)]$ is spread out, we need to find the random variable $T^*$ that maximizes the \textbf{differential entropy} $h[T] = -\int_{t=1}^\infty f(t) \log f(t) dt$ subject to the constraint that $E[T] = 1/\lambda$.  It is well-known that the unique solution to this maximum entropy problem is the \textbf{exponential distribution} $T^*$ which obeys $\Pr[T^* > t] = e^{-\lambda t}$ and has probability density $\Pr[T\in (t, t+dt)] = \lambda e^{-\lambda t} dt$. 

In our resource-theoretic treatment of clocks, we will treat exponential random variables as free resources, since they correspond to the weakest assumption we can make on our prior sense of time. This is reminiscent of the heat bath which is a free resource in thermodynamics, and is modeled by an exponential distribution (the Gibbs distribution), and is in the spirit of the MaxEnt philosophy~\cite{jaynes1957information}.

Apart from the differential entropy, we will find it useful to introduce another metric to report on the spread of a probability distribution. For random variables $T$ taking values in the positive reals, we define the \textbf{quality factor} $Q[T]$ as
\[
				Q[T] = \frac {E[T]}{\sqrt{E[T^2] - (E[T])^2}} 
\]
A higher quality factor would imply a narrower distribution and thus a higher probability for the outcome of the random variable to be close to the mean. The quality factor is a dimensionless quantity. In particular, it is invariant to change of the units by which we measure time. For exponential random variables $T$ the quality factor is $Q[T]=1$. If $T_1, T_2, \dots, T_n$ are $n$ independent, identically-distributed exponential random variables, then their sum $T = T_1 + T_2 + \dots T_n$ has quality factor $\sqrt{n}$. 

Thus one way to obtain accurate timekeeping is by keeping count of events with independent and identically distributed inter-arrival times, and declaring $n$ events to be one tick. This is the idea behind water clocks. Though the duration for each single drop to fall is highly random, the duration for the entire vessel to be emptied has a much higher quality factor. Another way is to couple the random variable $T$ to some periodic motion, which we explore in the next section.

\section{The single qubit clock}\label{sec:qubitclock}
Given a prior sense of time, we want to couple it with some physical experiment that will refine our estimate of time. It is natural to consider an experiment corresponding to a physical system that is undergoing periodic motion, so that we can exploit the periodicity to get an accurate time estimate.

A minimal example of periodic motion in quantum mechanics is a spin freely precessing around an axis, described by a single qubit evolving with respect to a time-invariant Hamiltonian. Another motivation for considering a single qubit system is the hope that general quantum clock systems can be described in terms of multiple qubits, so the current analysis may serve as a building block.

Given an arbitrary time-invariant Hamiltonian acting on a single qubit, let us call its ground state as $|0\rangle$ and its other eigenstate as $|1\rangle$, so that without loss of generality,
\begin{align}\label{eqn:ham}
\hat{H} = \frac{\hbar \omega}{2} |1\rangle\langle 1|-\frac{\hbar\omega}{2} |0 \rangle \langle 0|,
\end{align}
where $\hbar$ is the reduced Planck's constant. The qubit's unitary evolution can be visualized along the Bloch 
sphere~\cite{slichter}(Figure~\ref{fig:sub2}).  The point $(\theta, \phi)$ corresponds to the state $| \psi \rangle = \cos (\theta/2)| 0 \rangle + e^{i\phi}\sin(\theta/2)$. Through evolution, the angle $\theta$ remains constant, so that the circles of latitude are invariants of motion. The Bloch sphere ``spins'' anticlockwise around the $z$-axis with an angular velocity of $\omega$.

Imagine that an event whose inter-arrival times are exponentially distributed is perfectly coupled to a projective measurement of the qubit. For example, whenever a radioactive decay occurs, the spin gets measured. The coupling between the spin measurement and the event may be achieved via electrostatic coupling, exchange interaction or any other mechanism dictated by the coupling Hamiltonian, the details of which need to be carefully considered while designing a physical apparatus. Assume that these interactions are ``instantaneous'' and ideal, resulting in a simultaneous measurement of the qubit state when the event triggers. 

Our first task is to infer the time as best we can, from the observation of the measurement outcome. Our next task is to figure out how to tune the angular velocity $\omega$, the arrival rate $\lambda$ of the exponential process, the initial position of the qubit, and the measurement axis for the projective measurement to get the best clock possible with a single qubit coupled to an exponential random variable via a projective measurement. In Subsection~\ref{subsec:q}, we will analyze a special case where the precessing spin is on the equator of the Bloch sphere, and the projective measurement states are also on the equator of the Bloch sphere. In Subsection~\ref{subsec:q1}, we will argue that our solution to this special case is, in fact, the optimal clock possible.

\subsection{Equatorially-precessing qubit}\label{subsec:q}

Suppose the precessing spin starts on the equator of the Bloch sphere (Figure~\ref{fig:equator}) so the state is $|\psi(0)\rangle =\frac{1}{\sqrt2}(|0\rangle+e^{i\phi}|1\rangle)$. Then the qubit's state at time $t$ is given by:
\begin{equation}
\begin{split}
|\psi(t)\rangle & = e^{-i\hat{H}t/\hbar}|\psi(0)\rangle \\
&= \frac{1}{\sqrt2}(e^{-i\omega t/2}|0\rangle+e^{i(\omega t/2+\phi)}|1\rangle)
\end{split}
\end{equation}
\begin{figure}
	\includegraphics[width=0.28\textwidth, height=0.23\textwidth]{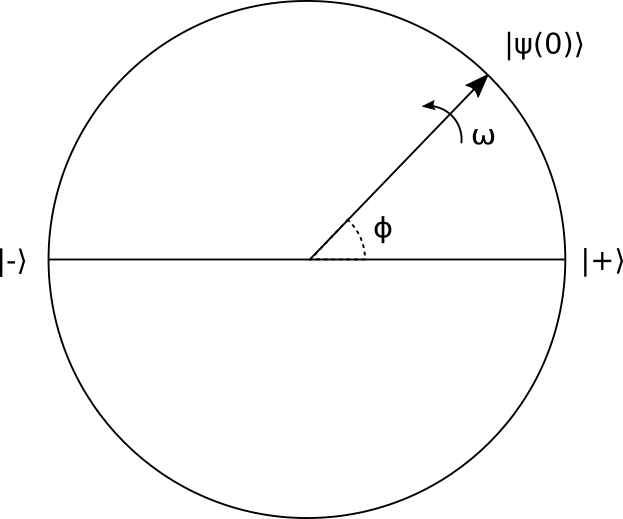}
	\quad
	\caption{A spin precesses with angular velocity $\omega$ on the equator of the Bloch sphere. Arrival of the exponential random variable with rate $\lambda$ triggers a projective measurement in the $\{|+\rangle,|-\rangle\}$ basis. The time reported depends on the measurement outcome.}
	\label{fig:equator}
\end{figure}

Suppose after an unknown passage of time $t$, the qubit were measured  by orthogonal projection to states $|+\rangle = \frac{1}{\sqrt2}(|0\rangle+|1\rangle)$ and $|-\rangle = \frac{1}{\sqrt2}(|0\rangle-|1\rangle)$.
Then according to the Born rule, the outcome of this measurement will be a ``spin'' random variable $S(\omega t)$ taking values ``+'' and ``-'' with 
\begin{align*}
\Pr[S(\omega t)  = +] &= \cos^2\frac{\phi + \omega t}{2}\
\\\Pr[S(\omega t)=-] &= \sin^2\frac{\phi + \omega t}{2}.\
\end{align*} 
Our task is to infer the time $t$ from the observation of $S(\omega t)$.

Let $T$ be an exponential random variable with rate $\lambda$. If $T$ triggers the measurement of the spin then we want to consider the distribution of the random variable $S(\omega T)$. 

\begin{equation*}
\Pr[S(\omega T) = +] = \int_{t=0}^{\infty} \lambda e^{-\lambda t}\Pr[S(\omega t)  = +]dt\
\end{equation*}
\begin{equation}
= \left[\frac{1}{2} + \frac{ 
\cos \phi-(\omega/\lambda)\sin \phi}{2(1+(\omega/\lambda)^2)}\right]
\end{equation}

\begin{equation}
\Pr[S(\omega T) = -] = \left[\frac{1}{2} -\frac{ 
\cos \phi-(\omega/\lambda)\sin \phi}{2(1+(\omega/\lambda)^2)}\right]
\end{equation}

Let $T_+=[T\mid S(\omega T)=+]$ be the posterior random variable upon measurement of spin `+',  and let $T_{-}=[T\mid S(\omega T)=-]$ be the posterior random variable upon measurement of spin `-'. 
\begin{figure}[htb!]
	\includegraphics[width=0.45\textwidth, height=0.3\textwidth]{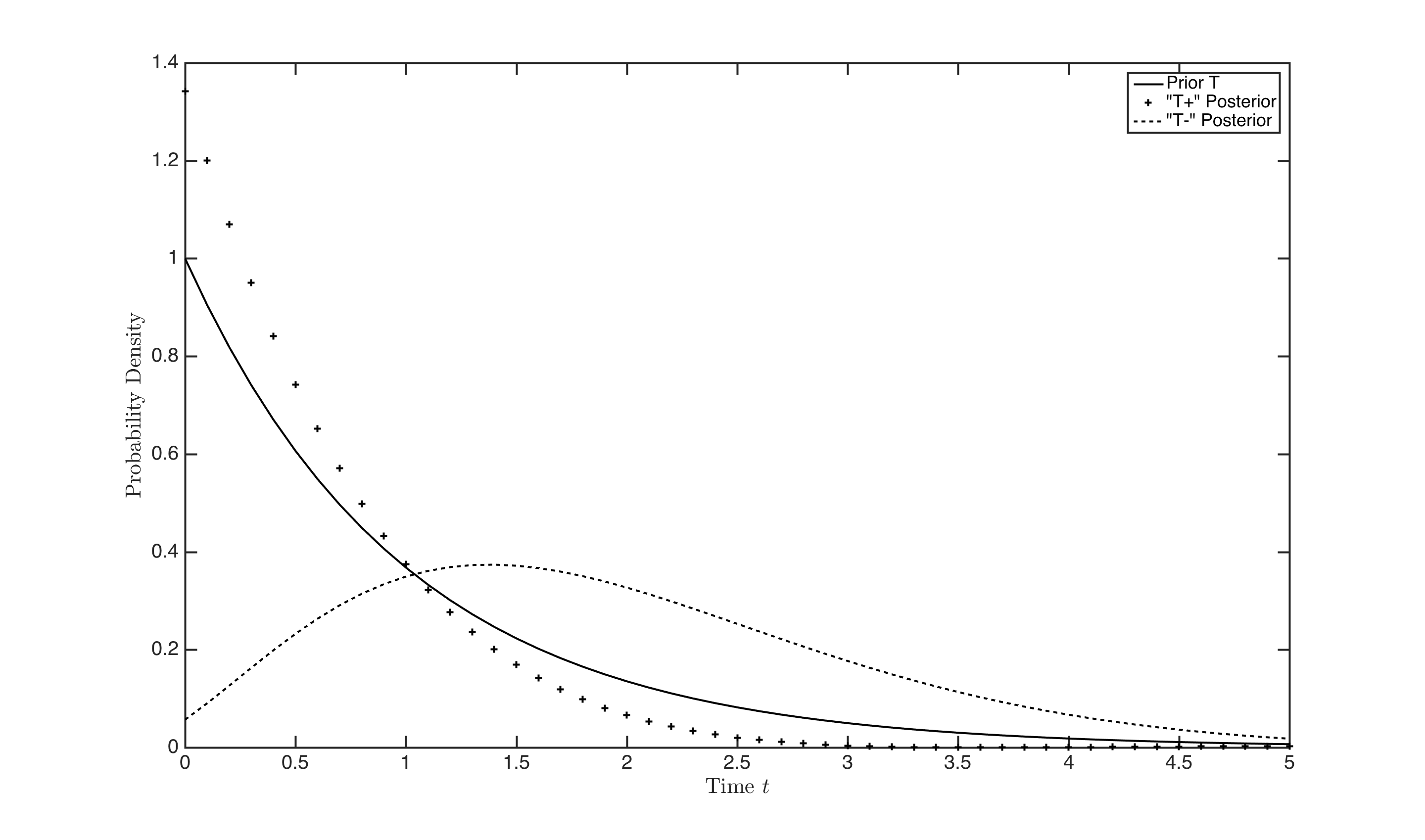}
	\quad
	\caption{Probability density functions for the prior $T$ and the two posteriors $T_{+}$ and $T_{-}$ at $\omega/\lambda =  0.80654$ and $\phi=0.246576$. The declared time for a measurement `-' is the mean of $T_-$, which is significantly different from the mean of $T$.\label{fig:posterior}}
	
\end{figure}
Immediately after the measurement, the state of the qubit has collapsed to either $|+\rangle$ or $|-\rangle$. The qubit evolves under the unitary dynamics of its Hamiltonian until the next event occurs with an exponential waiting time. Upon the occurrence of this event, the qubit is again measured, but now with respect to a new ``rotated'' basis $\{|+'\rangle,|-'\rangle\}$ which is at an angle $\phi$ clockwise to the previous measurement basis. This is how successive ticks are obtained. After $n$ measurements, if `+' was the measurement outcome a total of $n_+$ times, and `-' was the measurement outcome a total of $n-n_+ = n_-$ times, then we declare the time to be $n_+ E[T_{+}] + n_- E[T_{-}]$.

We define the  \textbf{expected quality factor} as
\begin{multline}\label{eq:avgq}
Q[T \mid S(\omega T)] = \Pr\left[S(\omega T) = +\right]\,Q[T_{+}]\,\, \\
+ \,\,\Pr\left[S(\omega T) = -\right]\,Q[T_{-}]
\end{multline}
Note that $Q[T\mid S(\omega T)]$ is periodic in $\phi$ with a fundamental period of $\pi$. This is because changing $\phi$ by $\pi$ is equivalent to interchanging $|+\rangle$ with $|-\rangle$. But since the choice of calling one of the basis vectors as $|+\rangle$ and the other $|-\rangle$ was purely arbitrary, they can't affect any aspect of physical reality, and hence all the metrics that we extract from this experiment would essentially remain the same.

By Bayesian Inference~\eqref{eq:bayes}, the densities of $T_{+}$ and $T_{-}$ are (Figure~\ref{fig:posterior}):
\begin{align*}
\Pr[T_{+} \in (t, t+dt)] &= \frac{{2(1+(\omega/\lambda)^2)}\lambda e^{-\lambda t}\cos^2\frac{\phi + \omega t}{2}dt}{1+(\omega/\lambda)^2 + 
\cos \phi-(\omega/\lambda)\sin \phi} \\
\Pr[T_{-} \in (t, t+dt)] &= \frac{{2(1+(\omega/\lambda)^2)}\lambda e^{-\lambda t}\sin^2\frac{\phi + \omega t}{2}dt}{1+(\omega/\lambda)^2 - 
\cos \phi+(\omega/\lambda)\sin \phi}
\end{align*}

The expected quality factor $Q[T \mid S(\omega T)]$ is a function of the ratio $\omega/\lambda$ and $\phi$. Figure~\ref{fig:q} shows an intensity plot of $Q[T \mid S(\omega T)]$ as a function of $\omega/\lambda$ and $\phi$. The quality factor attains a maximum value of $1.2184$ at $\omega/\lambda = 0.80654$ and $\phi=0.246576$. This is an improvement over the quality factor $Q[T]=1$ for the exponential random variable $T$.
\begin{figure}			
			\includegraphics[width=0.45\textwidth, height=0.32\textwidth]{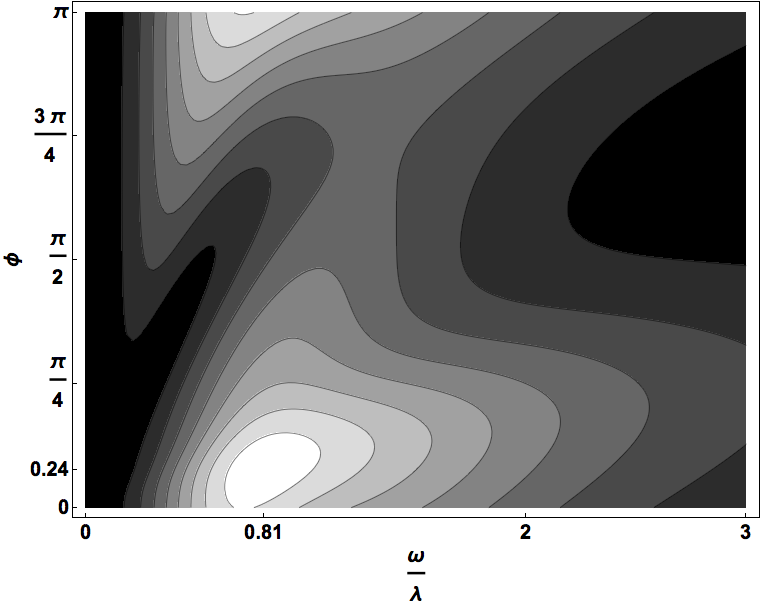}
			\quad
			\caption{Intensity plot of expected quality factor $Q[T \mid S(\omega T)]$ shows maximum at $\omega/\lambda=0.81,\phi=0.24$. \label{fig:q}}
			
\end{figure}
In the next subsection, we argue that this is the best quality factor attainable, even when a more general initial state and measurement basis are considered.

\subsection{\label{subsec:q1} General Single Qubit Clock}
\begin{figure}
	\includegraphics[width=0.37\textwidth, height=0.3\textwidth]{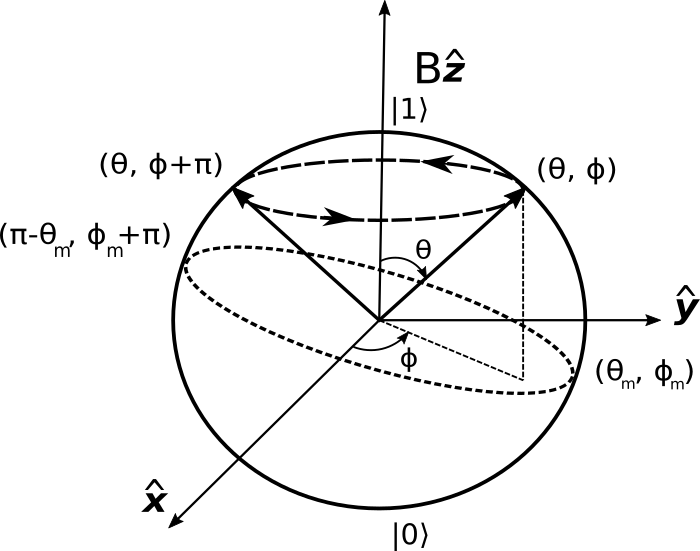}
	\quad
	\caption{\label{fig:sub2}A spin precesses on a circle of latitude making angle $\theta$ with the north pole. Arrival of the rate-$\lambda$ exponential random variable triggers a projective measurement in direction $(\theta_m,\phi_m)$. We maximize quality factor against $\phi,\phi_m,\theta,\theta_m, \omega,\lambda$ and find $\theta=\theta_m=\pi/2$.}	
\end{figure}

The qubit starts in a general initial state $| \theta_0, \phi_0 \rangle$, which in the computational basis is $ \cos(\theta_0/2)|0\rangle+\sin(\theta_0/2)e^{i\phi_0}|1\rangle$. Through time it traces a circle of latitude on the Bloch sphere (Figure~\ref{fig:sub2}). The state at time $t$ is given by 
\begin{equation*}
|\psi(t)\rangle  = \cos(\theta_0/2)e^{-i\omega t/2}|0\rangle+\sin(\theta_0/2)e^{i(\omega t/2+\phi_0)}|1\rangle
\end{equation*}
We denote the measurement basis by the antipodal points on the Bloch Sphere $|\theta_m, \phi_m\rangle$ and $|\pi - \theta_m, \pi+\phi_m\rangle$. The probabilities now are, 
\begin{equation}
\begin{split}
\Pr[S(\omega t)  = +] & =\big|   \cos(\theta_m/2) \cos(\theta_0/2) +\\
 & \quad \sin(\theta_m/2)\sin(\theta_0/2)e^{-i(\phi+\omega t)} \big|^2
\end{split}
\end{equation}
\begin{equation*}
\Pr[S(\omega T) = +] = \int_{t=0}^{\infty} \lambda e^{-\lambda t}Pr[S(\omega t)  = +]dt\
\end{equation*}
\begin{equation}\label{eqn:Sdist}
\begin{split}
= \left[\dfrac {\splitdfrac{(1+(\omega/\lambda)^2)(1+\cos\left(\theta_0\right)\cos\left(\theta_m\right))+}{
(\sin\left(\theta_0\right)\sin\left(\theta_m\right))(\cos\left(\phi\right)+(\omega/\lambda)\sin\left(\phi\right))}}{2(1+(\omega/\lambda)^2)}\right]
\end{split}
\end{equation}
with $\phi=\phi_0-\phi_m$. Maximizing $Q[T \mid S(\omega T)]$ over $\omega/\lambda, \theta_0, \theta_m$, and $\phi$, we find that $\theta_0=\theta_m=\pi/2$. (See Appendix~\ref{sec:app} for details.) Thus the maximum quality factor attainable in the most general case with a single qubit can already be obtained with the system analyzed in the previous section. 

Since Bayesian inference makes optimum use of the information available from coupling the random variables $T$ and $S(\omega T)$, we conclude that with these resource constraints, no further improvement is possible. In particular, with these resource constraints, quantum mechanics disallows perfectly accurate timekeeping.

\section{Energy-Accuracy Tradeoff}\label{sec:tradeoff}
Does it require energy to keep time? Specifically, must it require more energy to keep time more accurately? We show in this section that the answer is yes for our system. Further the excess energy required is lower bounded by $k_B T$ times the improvement in accuracy, where $k_B$ is Boltzmann's constant and $T$ is temperature. We first describe our metrics for accuracy and energy.

In this section, we will describe the accuracy of an inter-tick duration by its differential entropy. Thus improvement in accuracy is measured by the decrease in differential entropy. More precisely, if $T$ is an exponential random variable of mean $1/\lambda$, a straightforward calculation shows $h[T] = 1-\log\lambda$. After the spin random variable $S(\omega T)$ is observed, the conditional differential entropy $h[T\mid S(\omega T)]$ is, by definition,
${\Pr[S(\omega T) = +]\,h[T_+]\,\,+ \,\,\Pr[S(\omega T) = -]\,h[T_-]}$. The increase in accuracy is measured by the decrease in entropy $h[T] - h[T\mid S(\omega T)]$ caused by observing the coupled spin.

For energy accounting, we focus on the energy required to measure the spin. Let $p = \Pr[S(\omega T)=+]$. Then the spin random variable has an entropy $H[S(\omega T)] = - p \log p - (1-p)\log (1-p)$. By the Szilard-Landauer principle~\cite{szilard,landauer,Manoj}, we declare $k_B T H[S(\omega T)]$ as the energy cost for the spin measurement. The dissipation of this energy happens when the spin collapses from its pure state to the mixed state described by $p\,{|+\rangle\langle+|}\,\,+\,\,(1-p)\,|-\rangle\langle -|$. Work is done on the system when learning the outcome of the measurement, which takes us from the mixed state to the pure state $|+\rangle$ or $|-\rangle$ as reported by the measuring device. Learning the outcome of the measured spin corresponds to an ``erasure'' since the entropy of the qubit must decrease from $H[S(\omega T)]$ to $0$.

There may be other energy costs to the device apart from the measurement of the spin. Here we ignore other costs, so that our metric forms a lower bound on the true energy requirement. The next theorem states that \textbf{the energy expenditure is at least as much as the accuracy improvement}.
\begin{theorem}\label{thm:energyaccuracy}
 $H[S(\omega T)] \geq  h[T] - h[T\mid S(\omega T)]$. 
\end{theorem}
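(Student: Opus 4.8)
The plan is to recognize $h[T] - h[T\mid S(\omega T)]$ as the mutual information $I\big(T; S(\omega T)\big)$ between the exponential prior $T$ and the binary spin outcome, and then exploit the symmetry of mutual information together with non-negativity of the Shannon entropy of the discrete variable.

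First I would write the joint law of $(T, S(\omega T))$ through its density $g(t,s)$ against the product of Lebesgue measure on $\mathbb{R}_{\ge 0}$ and counting measure on $\{+,-\}$. It factors in two ways, $g(t,s) = \lambda e^{-\lambda t}\,\Pr[S(\omega t)=s]$ and $g(t,s) = \Pr[S(\omega T)=s]\,f_{T_s}(t)$, where $f_{T_\pm}$ are the explicit posterior densities displayed earlier; both factorizations are legitimate at the operating point since $\Pr[S(\omega T)=\pm]>0$ there. Setting $I\big(T;S(\omega T)\big) := \sum_{s\in\{+,-\}}\int_0^\infty g(t,s)\,\log\frac{g(t,s)}{f_T(t)\,\Pr[S(\omega T)=s]}\,dt$, I would then split the logarithm two ways. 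Using the second factorization the log becomes $\log\big(f_{T_s}(t)/f_T(t)\big)$, and carrying out the sum and integral turns $I$ into $h[T] - h[T\mid S(\omega T)]$ (the cross term collapses because $\sum_s \Pr[S(\omega T)=s]f_{T_s}(t) = f_T(t)$). Using the first factorization the log becomes $\log\big(\Pr[S(\omega t)=s]/\Pr[S(\omega T)=s]\big)$, and the same bookkeeping turns $I$ into $H[S(\omega T)] - H[S(\omega T)\mid T]$, where $H[S(\omega T)\mid T] := \int_0^\infty \lambda e^{-\lambda t}\,H[S(\omega t)]\,dt$ and $H[S(\omega t)]$ is the binary entropy of the Born-rule probabilities $\Pr[S(\omega t)=\pm]$.

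Equating the two expressions gives the identity $h[T] - h[T\mid S(\omega T)] = H[S(\omega T)] - H[S(\omega T)\mid T]$. Since $H[S(\omega t)]\ge 0$ for every $t$, the term $H[S(\omega T)\mid T]$ is a non-negative integral, and discarding it yields $h[T] - h[T\mid S(\omega T)] \le H[S(\omega T)]$, the claim. (Equivalently, cite $I\big(T;S(\omega T)\big)\ge 0$ together with $I\big(T;S(\omega T)\big) = H[S(\omega T)] - H[S(\omega T)\mid T]$.) One also sees that equality would force $H[S(\omega t)] = 0$ for $\lambda e^{-\lambda t}$-almost every $t$, which fails for any non-degenerate clock, so the bound is in fact strict.

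The only delicate point is justifying the ``two ways of splitting the logarithm'': that Fubini/Tonelli applies and each of $h[T]$, $h[T\mid S(\omega T)]$, $H[S(\omega T)]$ is finite so the rearrangement is valid. This is routine rather than substantive: each posterior $f_{T_s}(t)$ equals $\lambda e^{-\lambda t}$ times a bounded nonnegative function of $t$, so every integrand in sight is dominated by a linear combination of $e^{-\lambda t}$ and $t\,e^{-\lambda t}$, which are integrable on $\mathbb{R}_{\ge 0}$; in particular $h[T] = 1 - \log\lambda$ is finite, as are the conditional quantities. I expect this measure-theoretic bookkeeping to be the only real obstacle, and it is a mild one. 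Note that no optimization over $\omega/\lambda$, $\phi$, $\theta_0$, or $\theta_m$ is needed: the inequality holds pointwise in all the design parameters.
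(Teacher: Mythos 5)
Your proposal is correct and follows essentially the same route as the paper: both identify $h[T]-h[T\mid S(\omega T)]$ with the mutual information $I(T;S(\omega T))$, expand it the second way as $H[S(\omega T)]-H[S(\omega T)\mid T]$, and conclude from $H[S(\omega T)\mid T]\ge 0$. Your added care about Fubini, finiteness of the entropies, and the strictness of the inequality is a welcome refinement but not a different argument.
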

\begin{proof}
The measurement of spin can be viewed as a channel establishing (differential) mutual information $I(T;S(\omega T))$ between $T$ and $S(\omega T)$. Expanding $I(T;S(\omega T))$ two ways, we get:
\begin{align*}
I[T;S(\omega T)]  &=H[S(\omega T)] - H[S(\omega T) \mid T]\
\\&= h[T] - h[T\mid S(\omega T)].
\end{align*}
To conclude the proof, note that $H[S(\omega T) \mid T]\geq 0$.	
\end{proof}

This simple theorem has an interesting physical interpretation. It is well-known in thermodynamics that to reduce entropy in phase space requires work to be done on a system. Theorem~\ref{thm:energyaccuracy} suggests that even to reduce entropy along the time axis, (i.e., when our time-keeping devices are described by time-valued random variables) there may be a similar principle at work. In other words, it suggests that entropy over the time variable also obeys a Szilard-Landauer principle. If such a statement can be proved in much greater generality, it could lead to a thermodynamic theory of clocks. It would also be pleasing from the point of view of Relativity Theory, which requires treating spacetime together rather than separately. 

Taking the thermodynamic analogy further, consider the \textbf{efficiency} 
\[
\eta := (h[T] - h[T\mid S(\omega T)])/H[S(\omega T)].
\]
defined as improvement in accuracy per unit energy cost.

\begin{lemma}\label{lem:eta}
 $\eta$ is a function of $\omega/\lambda$. 
\end{lemma}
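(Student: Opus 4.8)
The plan is to show that all the quantities entering $\eta$ depend on $\omega$ and $\lambda$ only through the dimensionless ratio $r := \omega/\lambda$, once we optimize over or fix the remaining design parameters in the way prescribed by Section~\ref{sec:qubitclock}. First I would recall from Subsection~\ref{subsec:q1} that the optimal clock has $\theta_0 = \theta_m = \pi/2$, so the only free parameters left are $\phi$ and the pair $(\omega,\lambda)$; and since we are speaking of \emph{the} efficiency of the optimal clock, $\phi$ is pinned to its optimal value, which itself is determined by maximizing a function of $r$ alone. So it suffices to show $\eta$ is a function of $(r,\phi)$, and then specialize.

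The key steps, in order: (1) Observe that $p = \Pr[S(\omega T) = +]$, as given in the displayed formula right after Figure~\ref{fig:equator}, is manifestly a function of $r$ and $\phi$ only — no separate dependence on $\omega$ or $\lambda$ survives. Hence $H[S(\omega T)] = -p\log p - (1-p)\log(1-p)$ is a function of $(r,\phi)$. (2) For the numerator, write $h[T] - h[T\mid S(\omega T)] = I[T; S(\omega T)]$ using the mutual-information identity established in the proof of Theorem~\ref{thm:energyaccuracy}. Then expand the other way: $I[T;S(\omega T)] = H[S(\omega T)] - H[S(\omega T)\mid T]$, where $H[S(\omega T)\mid T] = E_T[\, h_2(\cos^2\tfrac{\phi+\omega T}{2})\,]$ with $h_2$ the binary entropy function. (3) Change the integration variable from $t$ to $u = \lambda t$ (equivalently $u = \omega t / r$): the exponential density $\lambda e^{-\lambda t}\,dt$ becomes $e^{-u}\,du$, and $\omega t = r u$, so the integrand becomes $h_2(\cos^2\tfrac{\phi + r u}{2})$ integrated against $e^{-u}\,du$. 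This integral depends only on $(r,\phi)$. Therefore $I[T;S(\omega T)]$, and hence the ratio $\eta$, is a function of $(r,\phi)$ alone; invoking the optimality of $\phi$ (a function of $r$) collapses this to a function of $r$.

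An alternative, perhaps cleaner, route for the numerator avoids the change-of-variables integral: note that $h[T] = 1 - \log\lambda$, and each posterior density $\Pr[T_\pm \in (t,t+dt)]$ displayed after Figure~\ref{fig:q} has the scaling form (density in $t$) $= \lambda\, g_\pm(\lambda t; r, \phi)$ for an explicit dimensionless profile $g_\pm$; substituting into $h[T_\pm] = -\int f_\pm \log f_\pm$ and again setting $u=\lambda t$ yields $h[T_\pm] = -\log\lambda + (\text{function of } r,\phi)$. The $-\log\lambda$ pieces cancel in $h[T] - h[T\mid S(\omega T)]$, leaving a function of $(r,\phi)$. Either way, the $\lambda$-dependence enters only through the overall additive $\log\lambda$ shift of differential entropies, which is a pure units artifact and cancels in the difference, exactly as one expects from the dimensional-analysis remark that $Q$ is invariant to the time unit.

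The main obstacle is essentially bookkeeping rather than a genuine difficulty: one must be careful that $\eta$ as written refers to the \emph{optimized} clock, so that $\phi$ is not a free variable but is itself fixed by an $r$-dependent optimization condition; if instead $\eta$ is regarded as a function on the full parameter space it is a function of $(r,\phi)$, and the lemma should be read accordingly. The only computational care needed is verifying the scaling form of the posterior densities and checking that no hidden dimensionful constant sneaks in — but this is immediate from the explicit formulas already displayed in Subsection~\ref{subsec:q}.
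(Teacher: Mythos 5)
Your proposal is correct and follows essentially the same route as the paper: rewrite the numerator as $I[T;S(\omega T)] = H[S(\omega T)] - H[S(\omega T)\mid T]$ and show each term depends on $\omega$ and $\lambda$ only through their ratio by a rescaling of the integration variable (your $u=\lambda t$ versus the paper's $\kappa=\omega t$ is an immaterial difference). Your remark about $\phi$ is also consistent with the paper, which immediately after the lemma treats $\eta$ as a function of $\omega/\lambda$ \emph{and} $\phi$ and only then optimizes over $\phi$.
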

\begin{proof}
As in the proof of Theorem~\ref{thm:energyaccuracy}, we can rewrite $\eta = (H[S(\omega T)] - H[S(\omega T) \mid T])/H[S(\omega T)]$. Now $H[S(\omega T)]$ is a function of $\Pr[S(\omega T)=+]$, which is a function of $\omega/\lambda$ and $\phi$ from Equation~\ref{eqn:Sdist}. It remains to show that $H[S(\omega T) \mid T]$ is also a function of $\omega/\lambda$. For every $t\in \mathbb{R}_{\geq 0}$, we have 
\begin{align*}
\Pr[S(\omega T) = + \mid T \in (t,t+dt)] &= \Pr[S(\omega t) = +]\text{ and}\ 
\\\Pr[S(\omega T) = - \mid T \in (t,t+dt)] &= \Pr[S(\omega t) = -] 
\end{align*}
Hence $H[S(\omega T) \mid T]$ is given by

\begin{equation}\label{}
\begin{split}
\int_{t=0}^\infty \bigg(- \Pr[S(\omega t) = +]\log  \Pr[S(\omega t) = +]\\ - \Pr[S(\omega t) = -]\log\Pr[S(\omega t) = -]\bigg)\lambda \e^{-\lambda t}  dt
\end{split}
\end{equation}
Changing the variable of integration to $\kappa = \omega t$, we get
\begin{multline*}
H[S(\omega T) \mid T]= \\
\int_{\kappa=0}^\infty \bigg(- \Pr[S(\kappa) = +]\log \Pr[S(\kappa) = +] -\\   \Pr[S(\kappa) = -]\log  \Pr[S(\kappa) = -] \bigg)\frac{\lambda}{\omega} \e^{-\frac{\lambda}{\omega} \kappa}  d\kappa
\end{multline*}
which is clearly a function of $\omega/\lambda$ since $\kappa$ disappears after integration.
\end{proof}

Because of Lemma~\ref{lem:eta}, we can study efficiency $\eta$ as a function of $\omega/\lambda$ and $\phi$. We obtain a maximum efficiency of $\eta=0.5103$ at $\omega/\lambda = 1.701, \theta=\theta_m=\pi/2$, and $\phi=0.43$. In comparison, if we had operated at the maximum accuracy point by setting $\omega/\lambda = 0.80654$ and $\phi=0.246576$ we would obtain a slightly lower efficiency of $\eta=0.49$.

\section{Nanoscale clock set up}\label{sec:setup}

\begin{figure}
		\includegraphics[width=0.45\textwidth, height=0.4\textwidth]{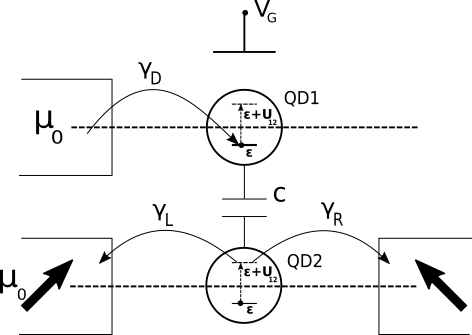}
	\quad
	\caption{Setup for a quantum dot single-qubit clock based on a capacitively coupled quantum dot pair QD1 and QD2. As per our proposal, the exponential prior is the tunneling event in QD1 and the precessing spin is housed inside QD2. Measurement is enabled when the freely precessing spin inside QD2 is "disturbed" by the tunneling event in QD1 via the mutual Coulomb repulsion $U_{12}$. This leads to the spin in QD2 to tunnel into either ferromagnetic contact respectively kept at a chosen quantization axis, leading to the measurement step. \label{fig:sub1}}
	
\end{figure}
We now propose a possible physical realization of the single qubit clock. This is a mesocopic set up comprising of two capacitively coupled quantum dots labeled QD1 and QD2 as shown in Fig.~\ref{fig:sub1}.  A similar set up has been implemented in the context of charge sensing~\cite{Hanson} and single electron memristors \cite{Baugh}. The dot QD1 is coupled weakly to a macroscopic reservoir which we will refer to as the contact. The dot QD2 is coupled to two ferromagnetic contacts whose magnetizations point along the desired measurement axes. We shall now describe how this set up functions as the clock whose goal is to estimate the tunneling time of the electron from the contact to QD1 conditioned on the measurement of a precessing spin housed in QD2.\\
\indent The entire set up at equilibrium is held at a chemical potential $\mu_0$. The tunnel coupling of the dot QD1 to the reservoir is represented via a rate $\gamma_D$ and the tunnel coupling of the dot QD2 to the two ferromagnetic contacts is represented via the rates $\gamma_L$ and $\gamma_R$. The dot QD2 houses the single qubit undergoing stable precession until it is ``disturbed'' by a tunneling event in QD1. This happens due to the long range mutual Coulomb repulsion $U_{12}$ between the electrons in QD1 and QD2.  Due to this, the single particle energies of either dot are $\epsilon$ and $\epsilon+U_{12}$, depending on whether the other dot has an occupied electron. A very large self charging energy of either quantum dot is assumed which prevents further tunneling of electrons from the reservoir, unless the chemical potential $\mu_0$ of the reservoir is raised above the energy that permits double occupation. \\
\indent For the dot QD1, the tunneling times between the contact and the dot are exponentially distributed with a time constant of $\gamma_D^{-1}$. To ensure the sequential nature of the electron tunneling, which is referred to as the sequential tunneling limit in mesoscopic physics \cite{Timm,Hanson,Datta}, the mutual Coulomb interaction energy $U_{12}$, should be much larger than both the coupling energy and the ambient thermal energy, i.e., $U_{12}>>k_BT$ $\&$ $\hbar 
\gamma_D$. An electron tunneling event typically occurs when the dot ground state energy $\epsilon$ is positioned below the chemical potential $\mu$ of the reservoir. This positioning may be tuned via an application of a potential $V_G$ at the gate electrode held close to the dot QD1. We remark that such set ups are very common within current experimental capabilities and are commonly used in spin based quantum computing \cite{Hanson}. \\
\indent The dot QD2 with the two ferromagnetic contacts in the lower half of the schematic serves as the measurement apparatus. Before the tunneling event into QD1 takes place, the electron in QD2 is stable since its ground state energy is lower than the chemical potential of the bottom system. This electronic qubit is made to precess with the application of a magnetic field along an appropriate axis \cite{Basky}. Once the electron tunnels into QD1, the energy level in QD2 is raised by an amount equal to $U_{12}$. If $U_{12}$ is such that the qubit energy level floats above the chemical potential in the bottom half, the precessing electron tunnels into either ferromagnetic contact. Thus we achieve the desired instantaneous coupling between the event, namely the tunneling process in QD1, and the measurement in QD2. Care must be taken that the coupling between QD2 and its reservoirs, $\gamma_L, \gamma_R$ is larger than that of QD1, to ensure the desired sequence of events. A quantum point contact (QPC) detector \cite{Hanson} is stationed near each electrode to ``sense'' whether an electron tunneled to the left or the right contact thereby allowing one to evaluate the necessary probability distributions required to perform the Bayesian inference of the tunneling time. Figure~\ref{fig:sub2} depicts a close up of the physical axes of precession and measurement, with the contacts being oriented along the measurement axes.\\
\section{\label{sec:conclusion}Related work}
Quantum clocks have been previously studied in a pioneering paper by Salecker and Wigner \cite{Wigner}. Their system consists of orthogonal quantum states, one for each digit on a clock face. A unitary evolution takes the system through this sequence of orthogonal quantum states. A projective measurement reports the digit on the clock face as the time. Such clocks were reviewed by Peres in 1979~\cite{peres} where, in addition, he analyzed the perturbative effect of coupling the clock to a physical system. The Salecker-Wigner-Peres clock has found many applications~\cite{leavens1993application,lunardi2011salecker,leavens1994exact,park2009barrier}.

Compared with a two-state version of the Salecker-Wigner clock, instead of merely returning the digit on the face of the clock as the time, we employ Bayesian inference to estimate the posterior distribution, and return its mean as the right estimator for the time. Our approach clarifies the uncertainty involved in timekeeping by explicitly treating timekeeping devices as random variables, and allows analysis of the uncertainty in our estimate of time. We also introduce the idea that it may require energy to keep time. However, we do not consider the perturbative effects that may be introduced when coupling our clock to a physical system to make time measurements. In these aspects, our approach is complementary to the approach of Salecker, Wigner, and Peres.

Our approach towards the study of clocks is influenced by the literature on quantum resource theories and quantum thermodynamics~\cite{horodecki2013quantumness,horodecki2002laws,devetak2008resource,janzing2000thermodynamic,brandao2013resource,brandao2015reversible,brandao2008entanglement,brandao2015second,horodecki2013fundamental,gour6586resource}. One key idea in this literature is to consider thermal equilibrium states as free resources. Analogously, we treat exponential random variables as free resources. Another idea we have borrowed is that of ``one-shot'' processes where thermodynamic questions are examined for single quantum systems instead of for an entire ensemble. The quantum resource theory literature treats questions of reachability and feasibililty. Our work manifests similar ideas in the form of limits on accuracy given certain amounts of resources and energy. Our work can also be viewed in the spirit of Constructor Theory~\cite{deutsch2013constructor,deutsch2015constructor}. 

The work of Rankovic et al.~\cite{rankovic2015quantum} has come to our attention after we prepared this manuscript. They have approached the problem of quantum clocks from a refreshingly fresh direction. They have tackled head-on a question that we have sidestepped: how does one provide an operational meaning to the accuracy of a clock, without relying on any outside, absolute notion of time. They do this by defining a clock's accuracy as the number of alternate ticks two noncommunicating copies of the clock can supply to a third party. They have a notion of an $\epsilon$-continuous quantum clock which appears closely related to our notion of exponential priors. By focusing on a single qubit and on certain simplifying assumptions, we are able to take our analysis to completion and obtain results about limits for a single qubit. In contrast, Rankovic et al. focus on correctly formulating the general problem of timekeeping from an operational point of view. A synthesis of their abstract approach with our concrete one is likely to be of interest.

Two ideas have emerged from the results in this paper: perfectly accurate timekeeping may not be possible with quantum systems, and that reducing uncertainty in time may require energy. Similar ideas have emerged from the work of Erker~\cite{erker2014quantum} through an analysis of quantum hourglasses. Sels and Wouters~\cite{sels2015thermodynamics} have argued that attributing a cost to the measurement of time will establish a second law-like result for unitary dynamics.

\section{Conclusions and future work}
It will be of interest to test our ideas against more general quantum clock constructions. We are tempted to speculate that the following should be a fundamental physical principle: \textbf{Keeping time more accurately than an exponential random variable should require energy proportional to the decrease in entropy from the exponential random variable of the same mean}.

Comparison of our proposal with state-of-the-art metrology standards such as atomic and optical atomic clocks \cite{Essen,Ludlow} seems like another fruitful direction for future work.

Our work is quantum only to the extent that we have used the measurement rule of quantum mechanics. Working with a single qubit allowed us to explore some new ideas with explicit calculations. However, by working only with a single qubit, we have completely ignored entanglement, which is a key feature of quantum mechanics. Many new features of quantum clocks are likely to emerge when one studies larger number of qubits and entanglement.

\begin{widetext}
\appendix
\section{Appendix}\label{sec:app}

In this Appendix, we will derive the expression for the expected quality factor for the General Single Qubit Clock in ~\ref{subsec:q1}. We define the Posterior distributions $T_+$, $T_-$ and $Q[T \mid S(\omega T)]$ analogously to  \ref{subsec:q}.

The posterior distributions are derived using Bayes' Rule~\eqref{eq:bayes}
\begin{equation}
\Pr[T_+ \in (t,t+dt)] = 
\frac{
2 e^{-t }(1 + (\omega/\lambda) ^2) \left|
  e^{i t (\omega/\lambda) /2} \cos \left(\theta_0/2\right) \cos \left(\theta_m/2\right) + 
   e^{i (\phi- t (\omega/\lambda) /2)} \sin  \left(\theta_0/2\right) \sin  \left(\theta_m/2\right)\right|^2}{1 + (\omega/\lambda) ^2 + (1 + (\omega/\lambda) ^2) \cos \theta_0 \cos \theta_m + \cos \phi \sin  \theta_0 \sin  \theta_m + 
 (\omega/\lambda)  \sin  \phi \sin  \theta_0 \sin  \theta_m}
 \end{equation}
 \begin{equation}
 \Pr[T_- \in (t,t+dt)] = \frac{2 e^{-t} (1 + (\omega/\lambda)^2)\left(1- \left|
  e^{it (\omega/\lambda)/2} \cos   \left(\theta_0/2\right) \cos   \left(\theta_m/2\right) + 
   e^{i(\phi- (t (\omega/\lambda))/2)} \sin  \left(\theta_0/2\right) \sin  \left(\theta_m/2\right)\right|^2\right)}{
   1 + (\omega/\lambda)^2 - (1 + (\omega/\lambda)^2) \cos   \theta_0 \cos   \theta_m - \cos   \phi \sin  \theta_0 \sin  \theta_m - 
 (\omega/\lambda) \sin  \phi \sin  \theta_0 \sin  \theta_m}
  \end{equation}
  The mean of the posterior distributions are then

\begin{equation}
\begin{split}
  E[T_+] &=    \int_{t=0}^{\infty} t \Pr[T_{+} \in (t, t+dt)]dt \\
  &=  \frac{1}{\lambda}\left[\frac {(1 +  (\omega/\lambda)^2)^2(1 + \cos \theta_0  \cos \theta_m  )  + (\sin \theta_0  \sin \theta_m)(\cos \phi    - 
   (\omega/\lambda)^2 \cos \phi    +2 (\omega/\lambda) \sin \phi  )}{(1 + (\omega/\lambda)^2) ((1 + (\omega/\lambda)^2)( 1  + 
     \cos \theta_0  \cos \theta_m ) + (  \sin \theta_0  \sin \theta_m )(\cos \phi +
     (\omega/\lambda) \sin \phi  ))}\right]
\end{split}
\end{equation}
\begin{equation}
\begin{split}
  E[T_-]  &=    \int_{t=0}^{\infty} t \Pr[T_{-} \in (t, t+dt)]dt \\
  &=  \frac{1}{\lambda}\left[\frac {(1 +  (\omega/\lambda)^2)^2(1 - \cos \theta_0  \cos \theta_m  )  - (\sin \theta_0  \sin \theta_m)(\cos \phi    -
   (\omega/\lambda)^2 \cos \phi    +2 (\omega/\lambda) \sin \phi  )}{(1 + (\omega/\lambda)^2) ((1 + (\omega/\lambda)^2)( 1  - 
     \cos \theta_0  \cos \theta_m ) - (  \sin \theta_0  \sin \theta_m )(\cos \phi +
     (\omega/\lambda) \sin \phi  ))}\right]
\end{split}
\end{equation}
\begin{equation}
\begin{split}
  E[T_+^2] =  \int_{t=0}^{\infty} t^2 \Pr[T_{+} \in (t, t+dt)]dt=\frac{1}{\lambda^2}\left[\dfrac {\splitdfrac{2 ((1 +  (\omega/\lambda)^2 )^3(1+ \cos  \theta_0 \cos  \theta_m)+ (\sin \theta_0 \sin \theta_m)(\cos  \phi  - }{ 3 (\omega/\lambda)^2 \cos  \phi + 3 (\omega/\lambda) \sin \phi - 
     (\omega/\lambda)^3 \sin \phi))}}{\splitdfrac{(1 + (\omega/\lambda)^2)^2 ((1 + (\omega/\lambda)^2)(1 + 
     \cos  \theta_0 \cos  \theta_m) +}{(\sin \theta_0 \sin \theta_m)( \cos  \phi  + 
     (\omega/\lambda) \sin \phi ))}}\right]
\end{split}
\end{equation}
\begin{equation}
\begin{split}
  E[T_-^2] =   \int_{t=0}^{\infty} t^2 \Pr[T_{-} \in (t, t+dt)]dt=\frac{1}{\lambda^2}\left[\dfrac {\splitdfrac{2 ((1 +  (\omega/\lambda)^2 )^3(1- \cos  \theta_0 \cos  \theta_m)- (\sin \theta_0 \sin \theta_m)(\cos  \phi  - }{ 3 (\omega/\lambda)^2 \cos  \phi + 3 (\omega/\lambda) \sin \phi - 
     (\omega/\lambda)^3 \sin \phi))}}{\splitdfrac{(1 + (\omega/\lambda)^2)^2 ((1 + (\omega/\lambda)^2)(1 - 
     \cos  \theta_0 \cos  \theta_m) -}{(\sin \theta_0 \sin \theta_m)( \cos  \phi  + 
     (\omega/\lambda) \sin \phi ))}}\right]
\end{split}
\end{equation}

The Expected Quality Factor can then be computed from ~\ref{subsec:rv} and~\eqref{eq:avgq}. Maximizing the expected quality factor as a function of $\theta_0$ and $\theta_m$ leads to the result $\theta_0=\theta_m=\pi/2$. In this case, the expected quality factor is: 
\begin{multline}
\frac{1}{\sqrt{2} (1 + \gamma^2)^2} \left[\frac{(1 + \gamma^2)^2 + (-1 + \gamma^2) \cos(\phi) - 2 \gamma \sin(\phi)}{
  \sqrt{\frac{
  3 + 6 \gamma^2 + 9 \gamma^4 + 8 \gamma^6 + 2 \gamma^8 - 
   4 (1 + \gamma^4 + 2 \gamma^6) \cos(\phi) + (1 - 6 \gamma^2 + \gamma^4) \cos(2 \phi) -4( 
   2  + \gamma^2  + \gamma^6 )\gamma\sin(\phi) + 4( 1  - 
    \gamma^2 )\gamma\sin(2 \phi)}{(1 + \gamma^2)^2 (-1 - \gamma^2 + \cos(\phi) + 
     \gamma \sin(\phi))^2}}} \right.\\
   \left.  + \frac{(1 + \gamma^2)^2 - (-1 + \gamma^2) \cos(\phi) + 2 \gamma \sin(\phi)}{
  \sqrt{\frac{
  3 + 6 \gamma^2 + 9 \gamma^4 + 8 \gamma^6 + 2 \gamma^8 + 
   4 (1 + \gamma^4 + 2 \gamma^6) \cos(\phi) + (1 - 6 \gamma^2 + \gamma^4) \cos(2 \phi) + 
   4(2 +  \gamma^2  +  \gamma^6 )\gamma\sin(\phi) + 4( 1  - 
    \gamma^2 )\gamma\sin(2 \phi)}{(1 + \gamma^2)^2 (1 + \gamma^2 + \cos(\phi) + \gamma \sin(\phi))^2}}}\right]
\end{multline}
where $\gamma=\omega/\lambda$. Similarly, maximizing the efficiency as a function of $\theta_0$ and $\theta_m$ leads to $\theta_0=\theta_m=\pi/2$.
\end{widetext}

\bibliographystyle{IEEEtran}
\bibliography{ref_clock}

\end{document}